\documentclass[aps,prd,superscriptaddress,nofootinbib,preprintnumbers,floatfix]{revtex4-2}

\usepackage{amsmath,amssymb,amsfonts}
\usepackage{graphicx}
\usepackage{hyperref}
\usepackage{color}
\usepackage{bm}
\usepackage{microtype}
\usepackage{mathrsfs}
\usepackage{amsthm}
\newtheorem{proposition}{Proposition}

\hypersetup{
    colorlinks=true,
    linkcolor=blue,
    citecolor=blue,
    urlcolor=blue
}

\begin{document}

\title{Fold-Point Scaling and Phenomenological Cascade Closures in Regular Black Hole Spacetimes}

\author{Hoang Van Quyet}
\affiliation{Department of Physics, Hanoi Pedagogical University 2, Xuan Hoa, Phu Tho, Vietnam}
\email{hoangvanquyet@hpu2.edu.vn}

\date{\today}

\begin{abstract}
Semiclassical analyses of regular black holes suggest that inner-horizon instabilities can drive the trapped region to evaporate faster than the Hawking time and seed a transient anti-trapped region. Whether this exchange repeats into a cascade terminating in a horizon-free configuration is an open question requiring a self-consistent solution of the semiclassical Einstein equations, which we do not attempt. Instead, within the static Bardeen family, we show that the renormalized-stress-tensor (RSET) result of Arrechea, Liberati and Spadafora [arXiv:2608.03538] does \emph{not} by itself force any damping of a hypothetical cascade near extremality, since its near-extremal limit at fixed cycle duration is finite and nonzero. We then prove, under explicit nondegeneracy hypotheses on any static metric family whose horizons merge at a fold point, that the inner-horizon surface gravity scales as $|\kappa_-|\propto(M-M_{\rm crit})^{1/2}$ with an explicit remainder, and verify this in two regular black hole families, Bardeen and Hayward, with a quantified fit-window/grid systematic envelope in addition to the regression error. A phenomenological closure built on this scaling gives a mass sequence whose gap decays as a power law, $x_n\sim n^{-2}$, generalizing to $x_n\sim n^{-1/(pq)}$ for a fold exponent $p$ and an uncalibrated closure exponent $q$; the cycle count $N(\varepsilon)$ is an explicit property of this closure, not a physical prediction, and we track a fractional per-cycle update diagnostic (not a physical adiabaticity test) to show where the discrete recursion changes slowly. As a separate, complementary exercise, we construct an effective adiabatic model for the outer-horizon mass by matching the exact Einstein tensor of a generalized Vaidya-Bardeen ansatz to the cited RSET flux; the resulting mass-loss law includes an $O(1)$ correction factor absent from the naive Schwarzschild-Vaidya relation, relaxes to $M_{\rm crit}$ asymptotically rather than in finite time, and yields a derived timescale that is shorter than the inner-horizon amplification timescale away from extremality by up to four orders of magnitude but longer below an explicitly computed crossover very close to extremality. We emphasize that this outer-horizon model rests on a specific reduction and adiabatic ansatz, not on a first-principles non-adiabatic solution, and we distinguish throughout what is derived under stated assumptions from what the closure simply postulates.
\end{abstract}

\maketitle

\section{Introduction}
\label{sec:intro}

Regular black hole spacetimes, first proposed by Bardeen~\cite{Bardeen1968} and generalized by Hayward~\cite{Hayward2006}, replace the classical curvature singularity of gravitational collapse with a smooth core, at the price of an inner (Cauchy) horizon $r_-$ in addition to the usual outer horizon $r_+$. In semiclassical gravity, quantized matter fields on such backgrounds source a slow, thermal evaporation of the outer horizon at the Hawking temperature~\cite{Hawking1975}, on the parametrically long timescale $O(M^3)$. A spacelike 2-surface is called trapped, anti-trapped, or untrapped according to the sign of the outgoing and ingoing null expansions $\theta_+,\theta_-$: trapped if both are negative, anti-trapped if both are positive, and untrapped otherwise~\cite{CarballoRubio2021}. The region between $r_-$ and $r_+$ of a black hole is trapped; the corresponding region of a white hole is anti-trapped.

Inner horizons are unstable both classically, through mass inflation~\cite{SimpsonPenrose1973,PoissonIsrael1990}, and semiclassically, through the exponential blueshift amplification of vacuum fluxes~\cite{CarballoRubio2021,DiFilippo2022,CarballoRubio2024,Barcelo2022}. Using the two-dimensional renormalized stress-energy tensor (RSET) formalism~\cite{ChristensenFulling1977,DaviesFulling1977,Polyakov1981,BirrellDavies1982,FabbriNavarroSalas2005} in the $s$-wave (Polyakov) approximation, and building on the collapse models of Refs.~\cite{CarballoRubio2008,BarenboimFrolovKunstatter2024,BarceloCarballoRubio2016}, which showed numerically that this inside-out evaporation of the trapped region proceeds much faster than $O(M^3)$, Arrechea, Liberati and Spadafora~\cite{Arrechea2026} (henceforth ALS) recently gave an analytic, fixed-background treatment of a single black-hole-to-white-hole transition: the outgoing RSET flux inside a trapped region is exponentially amplified near the inner horizon, and its backreaction, diagnosed through the Raychaudhuri equation, can flip the sign of the ingoing expansion, converting the trapped region into an anti-trapped one. Independently, Boyanov, Hilditch and Semi\~ao~\cite{BoyanovHilditchSemiao2026} obtained a self-consistent numerical solution of the semiclassical Einstein equations for a single such transition, confirming the qualitative picture. ALS explicitly note that the resulting anti-trapped region could in turn seed a further trapped region, opening the possibility of a cascade, but they are equally explicit that establishing the multi-cycle evolution requires a self-consistent backreaction calculation that is beyond the scope of their fixed-background diagnostic, and that the causal structure could in principle become more complicated than a single areal-radius function can describe (e.g., through wormhole necks).

This scenario is conceptually distinct from black-to-white-hole transitions proposed in loop quantum gravity, where the transition is a genuinely quantum-gravitational tunneling event through a Planck-curvature region~\cite{HaggardRovelli2015,Bianchi2018,HanRovelliSoltani2023}; here, by contrast, the transition is driven by matter-field vacuum polarization on a fixed low-curvature background, with no Planckian physics invoked. We mention the LQG scenario only to distinguish the mechanisms, not to suggest either supports the other.

The present paper does not perform the self-consistent multi-cycle calculation called for in Ref.~\cite{Arrechea2026}. Instead, it makes three more modest contributions. First, we show explicitly that the ALS flux formula does \emph{not} by itself imply that a hypothetical cascade damps near extremality; any such damping must be imposed as a modeling assumption. Second, we prove under explicit local hypotheses, and verify in two independent regular black hole families, that the inner-horizon surface gravity vanishes as a square root of the mass gap at a fold-type horizon merger, and work out what this implies for a minimal closure built on it, including a full classification in terms of the fold exponent $p$ and an uncalibrated closure exponent $q$. Third, for the specific sub-problem of outer-horizon evaporation within one trapped phase, we construct an effective adiabatic model from the exact Einstein tensor of a generalized Vaidya-Bardeen ansatz matched to an ALS formula, and show that even this explicit construction does not by itself reproduce the finite-time trapped-region disappearance seen in non-adiabatic numerical solutions, which we take as an illustration of where adiabatic reasoning about this problem is insufficient. To be unambiguous about what follows: this is a conditional, model-internal analysis, not a proposed physical evolution law, and we return to this distinction throughout.

Section~\ref{sec:geometry} reviews the two regular black hole backgrounds used. Section~\ref{sec:rset} summarizes the ALS RSET result and the non-vanishing of its near-extremal limit at fixed cycle duration. Section~\ref{sec:universality} states and proves the fold-scaling proposition and verifies it in both metrics. Section~\ref{sec:cascade} states the closure explicitly and derives the general $(p,q)$ classification. Section~\ref{sec:numerics} gives the numerical protocol and results, including sensitivity and a fractional-update diagnostic. Section~\ref{sec:vaidya} constructs an effective adiabatic model for the outer-horizon mass and compares its timescale to the inner-horizon one. Section~\ref{sec:phenomenology} discusses a purely illustrative radiation profile. Section~\ref{sec:conclusions} concludes.

\section{Regular Black Hole Backgrounds}
\label{sec:geometry}

We use the static, spherically symmetric line element in ingoing Eddington--Finkelstein coordinates,
\begin{equation}
\label{eq:metric}
ds^2 = -f(r)\, dv^2 + 2\, dv\, dr + r^2\, d\Omega^2,
\end{equation}
with $v$ the advanced time and $d\Omega^2=d\theta^2+\sin^2\theta\, d\phi^2$. Our primary background is the Bardeen metric function~\cite{Bardeen1968},
\begin{equation}
\label{eq:f_bardeen}
f_{\rm B}(r) = 1 - \frac{2Mr^2}{(r^2+g^2)^{3/2}},
\end{equation}
with $M$ the ADM mass; $g$ sets the curvature scale of the regular core and is tied to an effective magnetic charge in the metric's nonlinear-electrodynamics realization~\cite{Bardeen1968,Ayon-BeatoGarcia2000}, though we treat it as a free phenomenological length, as is standard in this literature. As a second, independent check of genericity (Sec.~\ref{sec:universality}), we also use the Hayward metric function~\cite{Hayward2006},
\begin{equation}
\label{eq:f_hayward}
f_{\rm H}(r) = 1 - \frac{2Mr^2}{r^3+2Ml^2},
\end{equation}
with core scale $l$. Both solve $f(r)=0$ for two positive roots $r_+>r_-$ when the core scale is small enough, merging at a critical mass:
\begin{align}
\label{eq:Mcrit_B}
\text{Bardeen:}&\quad M_{\rm crit} = \frac{3\sqrt3}{4}\, g, \quad r_0=\sqrt2\, g,\\
\label{eq:Mcrit_H}
\text{Hayward:}&\quad M_{\rm crit} = \frac{3\sqrt3}{4}\, l, \quad r_0=\sqrt3\, l.
\end{align}
For $M<M_{\rm crit}$, $f(r)>0$ everywhere and the geometry is horizon-free, in both cases.

The surface gravity $\kappa(r)\equiv\tfrac12 f'(r)$ for the Bardeen metric is
\begin{equation}
\label{eq:kappa}
\kappa_{\rm B}(r) = \frac{M\,r\,(r^2-2g^2)}{(r^2+g^2)^{5/2}}.
\end{equation}
At $r=r_+$ ($r_+^2>2g^2$) this is positive; at $r=r_-$ ($r_-^2<2g^2$) it is negative. The sign difference reflects the different orientation of the horizon-crossing null congruence at $r_\pm$ and is consistent with the well-known blueshift instability of inner horizons~\cite{PoissonIsrael1990}; the sign alone is not, by itself, a proof of a ``repulsive force,'' which would be a separate dynamical statement about test-particle motion that we do not make. Both $\kappa_+$ and $\kappa_-$ vanish as $M\to M_{\rm crit}$; Sec.~\ref{sec:universality} shows this vanishing is generic and quantifies its rate.

\begin{figure}[t]
\centering
\includegraphics[width=0.8\textwidth]{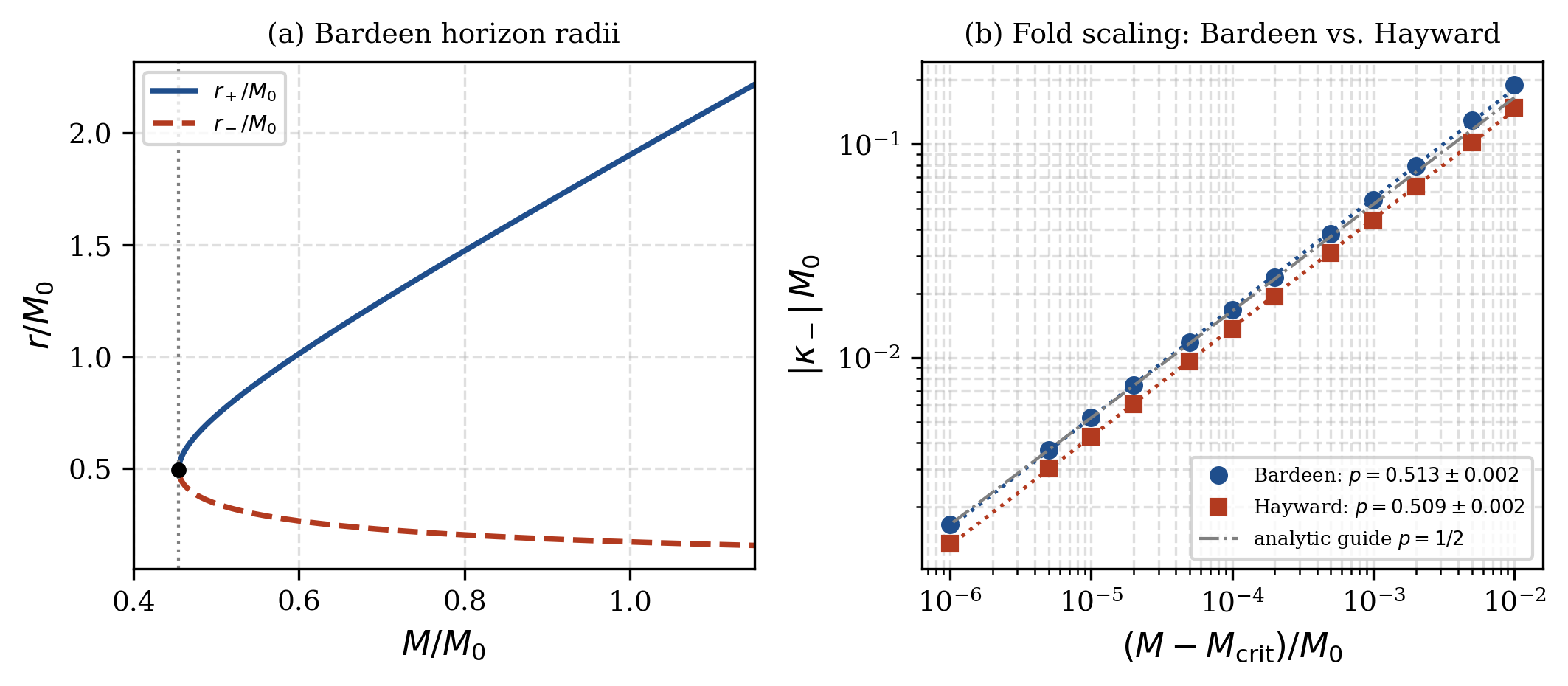}
\caption{(a) Outer and inner horizon radii of the Bardeen metric, Eq.~\eqref{eq:f_bardeen}, for $g=0.35\,M_0$; the two branches merge at $M_{\rm crit}\simeq0.455\,M_0$ (dotted line, black dot). (b) Inner-horizon surface gravity $|\kappa_-|$ versus $(M-M_{\rm crit})/M_0$ on log-log axes, for \emph{both} the Bardeen (circles) and Hayward (squares) metrics with matched core scale $g=l=0.35\,M_0$, each fitted over the \emph{same} twelve points spanning exactly four decades, $(M-M_{\rm crit})/M_0\in[10^{-6},10^{-2}]$. Least-squares fits give $p_{\rm Bardeen}=0.513\pm0.002$ and $p_{\rm Hayward}=0.509\pm0.002$ (regression error only; fit-window/grid systematic envelopes are $[0.503,0.514]$ and $[0.502,0.509]$ respectively, see text), both close to the $p=1/2$ prediction (gray guide) of Proposition~\ref{prop:fold}, proved under explicit hypotheses in Sec.~\ref{sec:universality}.}
\label{fig:horizons}
\end{figure}

\section{RSET Input and Its Near-Extremal Limit}
\label{sec:rset}

The physical input for our model is the RSET analysis of Ref.~\cite{Arrechea2026}, specialized to the Bardeen background. In a static region with $ds^2=-f(r)\,du\,dv$, the vacuum-subtracted $s$-wave RSET components entering the ALS diagnostic are
\begin{equation}
\label{eq:Tuu}
\langle T_{uu}\rangle_{\rm phys} = \frac{1}{192\pi}\,\frac{B(x_1)-B(r)}{f(r)^2},
\end{equation}
where $x_1$ is the radius at which an outgoing null ray enters the trapped region and
\begin{equation}
\label{eq:B_func}
B(r) \equiv \big[f'(r)\big]^2 - 2f(r)f''(r).
\end{equation}
For a trapped region bounded by an inner horizon of surface gravity $\kappa_-<0$ and advanced-time duration $\Delta v$, ALS show that this diagnostic quantity, evaluated at $r_-$ and evolved into the subsequent (evaporated, flat) region, behaves, for $\Delta v\to\infty$ at fixed background, as
\begin{equation}
\label{eq:flux_growth}
\big|\langle T_{uu}\rangle_{\rm phys}\big|_{r_-} \simeq \frac{|f^{(3)}(r_-)|}{384\pi\,|\kappa_-|}\Big[e^{2|\kappa_-|\Delta v}-1\Big].
\end{equation}
This is the analytic mechanism responsible for the exponential amplification seen numerically in Refs.~\cite{BarenboimFrolovKunstatter2024,BoyanovHilditchSemiao2026}, and for seeding the anti-trapped region within the ALS diagnostic. We refer to Eq.~\eqref{eq:flux_growth} throughout as the ``fixed-background RSET diagnostic quantity,'' rather than as a physical emitted flux, to avoid suggesting it is itself a measured or self-consistently propagated observable.

\emph{A naive but incorrect inference.} One might be tempted to conclude from Eq.~\eqref{eq:flux_growth} that this diagnostic quantity vanishes as $\kappa_-\to0$, since $\kappa_-$ multiplies the exponent. This is false at fixed $\Delta v$. Expanding the bracket for $|\kappa_-|\Delta v\to0$,
\begin{equation}
\label{eq:near_extremal_expansion}
\frac{e^{2|\kappa_-|\Delta v}-1}{|\kappa_-|} = 2\Delta v + O\big(|\kappa_-|\Delta v^2\big), \qquad |\kappa_-|\to0,
\end{equation}
so that $|\langle T_{uu}\rangle_{\rm phys}|_{r_-}\to \tfrac{|f^{(3)}(r_-)|}{192\pi}\Delta v$, a \emph{finite, nonzero} limit, not zero -- \emph{provided} $f^{(3)}(r_-)\neq0$. For the Bardeen metric this prefactor is indeed nonzero at the merger point: a direct computation gives
\begin{equation}
\label{eq:f3r0}
f_{\rm B}^{(3)}(r_0) = -\frac{11\sqrt2}{9\,g^3} \neq 0,
\end{equation}
and, as we verify in Sec.~\ref{sec:universality}, the analogous quantity for Hayward is $f_{\rm H}^{(3)}(r_0)=-8\sqrt3/(9l^3)\neq0$ as well. Equation~\eqref{eq:flux_growth} therefore does not, by itself, imply any suppression of this diagnostic quantity as the geometry approaches extremality \emph{at fixed $\Delta v$}. We stress that Eq.~\eqref{eq:near_extremal_expansion} only shows that $\kappa_-\to0$ alone does not force Eq.~\eqref{eq:flux_growth} to vanish: it compares two distinct limits of the same static-background expression (the ALS result is derived for $\Delta v\to\infty$ at fixed background; Eq.~\eqref{eq:near_extremal_expansion} probes fixed $\Delta v$ as $\kappa_-\to0$), and it is \emph{not} a uniform two-parameter asymptotics, nor a statement about the physical flux in an actual, self-consistently backreacted near-extremal evolution, where $\Delta v_n$, $f^{(3)}(r_-)$, and the background would all vary jointly. We do not claim to have shown that such a physical flux is finite or nonzero -- only that the naive inference to the contrary is unjustified. A first-principles derivation of any suppression would require tracking this joint scaling under a self-consistent backreaction solution -- precisely the calculation ALS identify as open. In Sec.~\ref{sec:cascade} we therefore impose a damping law by hand, as an explicit closure assumption, rather than deriving one from Eq.~\eqref{eq:flux_growth}.

We also stress that Eq.~\eqref{eq:flux_growth} is a two-dimensional, $s$-wave quantity. Converting it into a four-dimensional ADM or Bondi mass loss is not simply a matter of multiplying by the area factor of the $4\pi r^2$ dimensional-reduction convention; it also requires an assumption about the quantum state's normalization, the covariant conservation law relating $\langle T_{uu}\rangle$ to $dM/dv$ along the relevant asymptotic foliation, and (for a realistic astrophysical estimate) four-dimensional greybody factors relating the near-horizon flux to the flux actually received at $\mathscr{I}^+$. None of this is computed in Ref.~\cite{Arrechea2026} or attempted here. We take $\Delta M_n$ below to be a phenomenological, closure-defined proxy for radiated energy in natural units, not a first-principles 4D mass-loss rate, and we label it accordingly throughout.

\section{Fold-Point Scaling: Proposition and Two-Metric Check}
\label{sec:universality}

This section states and proves the paper's one result that holds under general local hypotheses rather than for a specific metric, then verifies it numerically in two regular black hole families. We stress at the outset that the proposition below is a \emph{local} statement about static, sufficiently smooth, one-parameter spherically symmetric metric families satisfying explicit nondegeneracy hypotheses; it is not a claim about the dynamics of regular black holes in general, nor about metric families for which these hypotheses fail.

\begin{proposition}
\label{prop:fold}
Let $f(r;M)$ be $C^3$ jointly in $(r,M)$ near a point $(r_0,M_{\rm crit})$, such that for $M>M_{\rm crit}$ in some neighborhood, $f(r;M)=0$ has exactly two roots $r_-(M)<r_0<r_+(M)$ with $r_\pm(M)\to r_0$ as $M\to M_{\rm crit}^+$, and $f(r;M_{\rm crit})=0$ has $r_0$ as its only root near $r_0$. Define $\kappa(r,M)\equiv\tfrac12\partial_rf(r;M)$ and let $M(r)$ denote the horizon curve implicitly defined by $f(r;M(r))\equiv0$ near $r_0$. Suppose
\begin{enumerate}
\item[(i)] $M'(r_0)=0$ and $M''(r_0)>0$ (a nondegenerate/Morse minimum of $M(r)$ at $r_0$), and
\item[(ii)] $\partial_r\kappa(r_0,M_{\rm crit})\neq0$.
\end{enumerate}
Then, along each horizon branch,
\begin{equation}
\label{eq:fold_expansion}
r_\pm(M) - r_0 = \pm\sqrt{\frac{2(M-M_{\rm crit})}{M''(r_0)}} + O(M-M_{\rm crit}),
\end{equation}
\begin{equation}
\label{eq:generic_p}
\kappa_\pm(M) = \pm C\sqrt{M-M_{\rm crit}} + O(M-M_{\rm crit}), \qquad M\to M_{\rm crit}^+,
\end{equation}
with $C\equiv\partial_r\kappa(r_0,M_{\rm crit})\sqrt{2/M''(r_0)}$; in particular $|\kappa_\pm|\propto(M-M_{\rm crit})^{1/2}$ to leading order.
\end{proposition}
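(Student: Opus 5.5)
The plan is to work entirely along the horizon curve $M(r)$ and reduce both expansions to Taylor's theorem plus an elementary inversion. I read the statement's phrase ``the horizon curve implicitly defined by $f(r;M(r))\equiv0$'' as presupposing $\partial_Mf(r_0,M_{\rm crit})\neq0$. Under that assumption the implicit function theorem makes $M(r)$ a $C^3$ function near $r_0$, since $f$ is $C^3$. This is exactly the regularity needed for a Taylor expansion with remainder $O((r-r_0)^3)$. Throughout I write $s\equiv r-r_0$ and $\delta\equiv M-M_{\rm crit}$.

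\emph{Step 1: the horizon branches.} By hypothesis (i), $M(r)=M_{\rm crit}+\tfrac12M''(r_0)s^2+O(s^3)$. I also use $M'(r)=M''(r_0)s+O(s^2)$. This shows $M$ is strictly decreasing on a small interval to the left of $r_0$ and strictly increasing to the right. Each half-branch is therefore invertible, which gives exactly one root on each side for every small $\delta>0$; these are the roots $r_\mp(M)$ of the hypotheses. On a branch, $\delta=\tfrac12M''(r_0)s^2\,(1+O(s))$. Taking square roots gives $s=\pm\sqrt{2\delta/M''(r_0)}\,(1+O(s))$. Since $s=O(\delta^{1/2})$, the correction term is $O(s^2)=O(\delta)$, which is Eq.~\eqref{eq:fold_expansion}.

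\emph{Step 2: the surface gravity.} First I check that $\kappa(r_0,M_{\rm crit})=0$. Differentiating $f(r,M(r))\equiv0$ gives $\partial_rf+\partial_Mf\,M'(r)=0$, so $M'(r_0)=0$ forces $\partial_rf(r_0,M_{\rm crit})=0$. Next, $\kappa=\tfrac12\partial_rf$ is $C^2$, so I expand it jointly about $(r_0,M_{\rm crit})$:
\begin{equation*}
\kappa(r_\pm(M),M)=\partial_r\kappa\, s_\pm+\partial_M\kappa\,\delta+O(s_\pm^2+\delta^2).
\end{equation*}
Substituting Step 1, the $\partial_M\kappa\,\delta$ term and the $O(s^2)$ term are both $O(\delta)$. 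The leading term $\partial_r\kappa\cdot(\pm\sqrt{2\delta/M''(r_0)})$ plus its $O(\delta)$ correction then gives Eq.~\eqref{eq:generic_p} with the stated $C$. Hypothesis (ii) enters only here: it guarantees $C\neq0$, so that $|\kappa_\pm|\propto\delta^{1/2}$ is genuinely the leading order rather than being absorbed into the remainder.

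\emph{Expected obstacle.} The main delicate point is regularity at the fold, not the algebra. If $\partial_Mf$ vanished at $(r_0,M_{\rm crit})$, $M(r)$ might fail to be $C^3$, and Step 1's remainder would not be controlled. I would therefore first state explicitly that the existence of a $C^3$ horizon curve, as presupposed in the statement, requires $\partial_Mf(r_0,M_{\rm crit})\neq0$. I would then verify that condition directly for Bardeen and Hayward. In both, $\partial_Mf=-2r^2/(\cdots)<0$, which is nonzero.

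A secondary point is keeping the error terms honest under the square-root inversion. To do this I would write $s=\pm\sqrt{2\delta/M''}\,(1+h(s))$ with $h(s)=O(s)$, rather than manipulating asymptotic symbols loosely.
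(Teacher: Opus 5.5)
Your proposal is correct and is essentially the paper's proof: Taylor-expand the Morse minimum of $M(r)$, invert the fold, then expand $\kappa$ jointly about $(r_0,M_{\rm crit})$ along each branch, differing only in that you obtain $\kappa(r_0,M_{\rm crit})=0$ from $\partial_rf+\partial_Mf\,M'(r)=0$ instead of the paper's double-root remark, and that you make explicit the condition $\partial_Mf(r_0,M_{\rm crit})\neq0$ which the paper uses tacitly when it calls $M(r)$ a $C^3$ function. That condition follows from the hypotheses rather than being an extra reading (so your claim that a $C^3$ horizon curve ``requires'' it is too strong): differentiating $f(r,M(r))\equiv0$ twice at $r_0$ with $M'(r_0)=0$ gives $2\,\partial_r\kappa(r_0,M_{\rm crit})=-\partial_Mf(r_0,M_{\rm crit})\,M''(r_0)$, so it is hypothesis (ii), not the mere existence of a smooth horizon curve (which can occur with $\partial_Mf=0$, e.g.\ $f=(M-r^2)(r^2+M^2)$ at the origin), that guarantees it.
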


\begin{proof}
Since $f(r_0;M_{\rm crit})=0$ and $r_0$ is a double root of $f(\cdot\,;M_{\rm crit})$ (the two branches $r_\pm$ merge there), $\partial_rf(r_0;M_{\rm crit})=0$ as well, i.e.\ $\kappa(r_0,M_{\rm crit})=0$ identically, independent of any further hypothesis on $f$. Condition (i) makes $M_{\rm crit}=M(r_0)$ a nondegenerate minimum of the $C^3$ function $M(r)$, so Taylor's theorem with remainder gives $M(r)=M_{\rm crit}+\tfrac12M''(r_0)(r-r_0)^2+O((r-r_0)^3)$ near $r_0$. Solving for $r-r_0$ (a standard inverse-function/Puiseux expansion for a fold singularity, valid because $M''(r_0)\neq0$) yields the two real branches
\[
r_\pm(M)-r_0=\pm\sqrt{\frac{2(M-M_{\rm crit})}{M''(r_0)}}\Big[1+O(\sqrt{M-M_{\rm crit}})\Big],
\]
which is Eq.~\eqref{eq:fold_expansion}. Separately, Taylor-expanding $\kappa(r,M)$ to first order about $(r_0,M_{\rm crit})$,
\[
\kappa(r,M) = \partial_r\kappa(r_0,M_{\rm crit})\,(r-r_0) + \partial_M\kappa(r_0,M_{\rm crit})\,(M-M_{\rm crit}) + O(2),
\]
where $\partial_M\kappa(r_0,M_{\rm crit})$ is finite by the assumed $C^3$ smoothness of $f$ (so $\kappa=\tfrac12\partial_rf$ is $C^2$, and its first partial derivatives are continuous, hence bounded near $(r_0,M_{\rm crit})$). Evaluating along $r=r_\pm(M)$ and substituting Eq.~\eqref{eq:fold_expansion}, the first term is $O(\sqrt{M-M_{\rm crit}})$ while the second is $O(M-M_{\rm crit})$, strictly subleading as $M\to M_{\rm crit}^+$. Under hypothesis (ii), the leading term does not vanish, giving Eq.~\eqref{eq:generic_p}.
\end{proof}

Hypotheses (i)--(ii) are local nondegeneracy conditions on $f$ at the specific point $(r_0,M_{\rm crit})$; they fail only if $M(r)$ has a higher-order (non-Morse) critical point there, or if the surface gravity happens to be stationary in $r$ exactly at the merger. Neither failure occurs for generic parameter choices in a given metric family, but we make no claim about metric families or parameter ranges beyond the two checked below, and we do not assert a universal statement about regular black hole horizons as a class.

For Bardeen, $M(r)=(r^2+g^2)^{3/2}/(2r^2)$, so $M''(r_0)=\sqrt3/(2g)>0$ and $\partial_r\kappa(r_0,M_{\rm crit})=1/(3g^2)\neq0$: both hypotheses hold, and Eq.~\eqref{eq:f3r0} confirms the associated $f^{(3)}(r_0)\neq0$ used in Sec.~\ref{sec:rset}. For Hayward, Eq.~\eqref{eq:f_hayward} gives $M(r)=r^3/[2(r^2-l^2)]$, with critical point at $r_0=\sqrt3\,l$, $M_{\rm crit}=\tfrac{3\sqrt3}{4}l$ [Eq.~\eqref{eq:Mcrit_H}]; direct computation gives $M''(r_0)=3\sqrt3/(4l)>0$, $\partial_r\kappa(r_0,M_{\rm crit})=1/(3l^2)\neq0$, and $f_{\rm H}^{(3)}(r_0)=-8\sqrt3/(9l^3)\neq0$. Both metrics satisfy Proposition~\ref{prop:fold}'s hypotheses, and both are therefore expected to show $p=1/2$ to leading order.

Figure~\ref{fig:horizons}(b) confirms this numerically: fitting $|\kappa_-(M)|\propto(M-M_{\rm crit})^p$ over the \emph{same} twelve-point grid spanning four decades, $(M-M_{\rm crit})/M_0\in[10^{-6},10^{-2}]$, for both metrics at matched core scale $g=l=0.35\,M_0$, gives regression fits $p_{\rm Bardeen}=0.513\pm0.002$ and $p_{\rm Hayward}=0.509\pm0.002$. Since the regression error alone does not capture systematic effects, we repeated each fit over three fit windows (the full 12-point range and its 8- and 6-point most-asymptotic subsets) and three root-finding grid resolutions (3000, 6000, 12000 points); the fitted $p$ is completely insensitive to grid resolution at fixed window (root-finding is via bracketed Brent iteration to a fixed tolerance, so grid resolution only affects bracket-finding, not root precision) but varies with fit window, giving envelopes $p_{\rm Bardeen}\in[0.503,0.514]$ and $p_{\rm Hayward}\in[0.502,0.509]$; the most-asymptotic windows lie closest to $1/2$ in both metrics, consistent with the expected $O(\sqrt{M-M_{\rm crit}})$ subleading correction of Eq.~\eqref{eq:generic_p}. We report this envelope rather than a single-fit precision claim. This two-metric agreement is evidence, but not a proof for all regular black holes, that the fold scaling of Eq.~\eqref{eq:generic_p} is a structural feature of horizon mergers satisfying Proposition~\ref{prop:fold}'s hypotheses, rather than an accident of the Bardeen metric specifically.

\section{A Phenomenological Cascade Closure}
\label{sec:cascade}

\paragraph{Scope and evidential status.} The recursion below is a conditional phenomenological closure on the static Bardeen family. It is not obtained by integrating the semiclassical Einstein equations, and it does not determine whether repeated trapped/anti-trapped transitions occur in a self-consistent spacetime. The closure supplies, by assumption, both the relation between the fixed-background RSET diagnostic and the mass update, and the cycle duration implicitly encoded in its efficiency parameter. The exponents and tolerance-dependent cycle counts derived below therefore characterize this closure class, not a universal evaporation law; Sec.~\ref{sec:universality} is the only part of this analysis we consider metric-independent.

We define $x_n\equiv (M_n-M_{\rm crit})/M_0\ge0$ and the normalized damping factor
\begin{equation}
\label{eq:Kdef}
K(M) \equiv \frac{|\kappa_-(M)|}{|\kappa_-(M_0)|} \in (0,1], \qquad M\in(M_{\rm crit},M_0],
\end{equation}
well-defined because $|\kappa_-(M)|$ increases monotonically with $M$ on this interval, which we verify numerically for $g/M_0=0.35$ (Fig.~\ref{fig:horizons}b) without a general proof for arbitrary $g$. For a closure exponent $q>0$, we posit
\begin{equation}
\label{eq:recurrence}
x_{n+1} = x_n\big[1-\eta_0\, K(M_n)^q\big], \qquad 0<\eta_0\,K(M)^q<1,
\end{equation}
with $\eta_0\in(0,1)$ a free efficiency parameter; $\eta_0$ and $q$ are postulated, not calibrated against any independent data (there is currently no multi-transition numerical solution to calibrate them against). The actual, closure-defined mass decrement in cycle $n$ is, unambiguously,
\begin{equation}
\label{eq:DeltaMdef}
\Delta M_n \equiv M_n - M_{n+1} = \eta_0\,K(M_n)^q\,(M_n-M_{\rm crit}),
\end{equation}
which we call ``radiated'' only under the additional 2D-to-4D assumptions of Sec.~\ref{sec:rset}. We introduce no residual or terminal energy bookkeeping; stopping the iteration at $x_n<\varepsilon$ is a numerical convention, not a physical endpoint.

\paragraph{General $(p,q)$ classification.} Using the fold scaling $K(M)\simeq\beta x^p$ from Sec.~\ref{sec:universality} (with $p=1/2$ for both metrics studied, $x\equiv(M-M_{\rm crit})/M_0$) in the closure~\eqref{eq:recurrence}, the continuum limit is
\begin{equation}
\label{eq:continuum_general}
\frac{dx}{dn} \simeq -\eta_0\beta^q\, x^{1+pq},
\end{equation}
whose solution gives the asymptotic power law
\begin{equation}
\label{eq:powerlaw_general}
x_n \;\sim\; \Big(\tfrac{pq}{2}\,\eta_0\beta^q\, n\Big)^{-1/(pq)}, \qquad n\to\infty,
\end{equation}
so that the tolerance-crossing cycle count scales as $N(\varepsilon)\sim\varepsilon^{-pq}$. Our numerical results use the linear closure $q=1$, giving the special case
\begin{equation}
\label{eq:powerlaw}
x_n \sim \frac{4}{(\eta_0\beta\, n)^2}, \qquad N(\varepsilon)\sim\varepsilon^{-1/2}, \qquad (p=\tfrac12,\,q=1).
\end{equation}
Equation~\eqref{eq:powerlaw_general} is the paper's main classification statement: the specific exponent $-2$ in Eq.~\eqref{eq:powerlaw} is not an independent physical input but a consequence of combining the (here, doubly-verified) fold exponent $p=1/2$ with one particular, unverified choice of closure exponent $q=1$. Under \emph{any} closure in this class, the sequence approaches $M_{\rm crit}$ only asymptotically and never reaches it after a model-independent finite number of cycles; any quoted $N$ is a property of $(\eta_0,q,\varepsilon)$, not a physical prediction. We report numerical results only for $q=1$ below.

\paragraph{Step-size versus adiabaticity.} Treating $f(r;M_n)$ as static within each cycle is a modeling choice, and the cited mechanism is an \emph{exponential} amplification that could, in principle, produce large changes per cycle. Because no cycle in Eq.~\eqref{eq:recurrence} carries an assigned physical duration $\Delta v_n$ -- it is absorbed into $\eta_0$ -- we cannot form a genuine rate-based adiabaticity criterion (e.g.\ comparing $|\dot M|/(M|\kappa_+|)$ to unity) from the discrete recursion alone. What we can and do check, in Sec.~\ref{sec:numerics}, is the more limited fractional per-cycle update $\Delta M_n/M_n$, which is of order unity for the first few cycles and falls rapidly thereafter (Fig.~\ref{fig:adiabaticity}). We report this as a step-size diagnostic of the recursion itself, not as a physical adiabaticity test.

\section{Numerical Protocol and Results}
\label{sec:numerics}

\paragraph{Numerical protocol.} At every step we solve $f(r;M_n)=0$ only for $M_n>M_{\rm crit}$. Roots are located in two stages: (i) $f(r;M_n)$ is evaluated on a uniform grid of $4000$--$6000$ points on each of the finite intervals $[10^{-6},r_0]$ and $[r_0,r_{\max}]$, with $r_{\max}=6\,M_0$ (chosen so that $r_+(M_0)\ll r_{\max}$ with a comfortable margin; doubling $r_{\max}$ leaves all reported results unchanged to the quoted precision); (ii) within the bracketing grid interval where $f$ changes sign, Brent's method (\texttt{scipy.optimize.brentq}, absolute tolerance $10^{-13}$ in $r$) refines the root. We accept a root only if the residual satisfies $|f(r_\pm;M_n)|<10^{-10}$; no root failed this check in any run reported here. In pseudocode,
\begin{verbatim}
def horizons(M):
  grid_lo = linspace(1e-6, r0, N_grid)
  grid_hi = linspace(r0, 6*M0, N_grid)
  find sign change in f(grid_lo,M) -> bracket
  r_minus = brentq(f, bracket, xtol=1e-13)
  (repeat on grid_hi for r_plus)
  assert |f(r_minus,M)|<1e-10
  assert |f(r_plus,M)|<1e-10
  return r_minus, r_plus
\end{verbatim}
The cascade recursion, Eq.~\eqref{eq:recurrence}, is iterated forward from $M_0$ with $q=1$, calling \texttt{horizons} once per cycle to evaluate $\kappa_-(M_n)$. We use $g=0.35\,M_0$ (so $M_{\rm crit}\simeq0.4547\,M_0$) throughout, and report results for several $(\eta_0,\varepsilon)$ pairs explicitly rather than a single preferred choice. For the timescale comparison of Sec.~\ref{sec:vaidya} we additionally need $\kappa_\pm(M)$ reliably down to $x=(M-M_{\rm crit})/M_0\sim10^{-7}$, where the true horizon splitting $|r_\pm-r_0|\propto\sqrt{x}$ becomes comparable to, and then smaller than, the fixed grid spacing $r_0/N_{\rm grid}$ used above; there, we instead bracket each root within an adaptive window of half-width $\sim5\sqrt{2x/M''(r_0)}$ about $r_0$ (doubled until a sign change is found), which we have checked reproduces the fixed-grid results wherever both are reliable ($x\gtrsim10^{-3}$) and remains well-behaved down to $x\sim10^{-7}$. The complete implementation (Python, \texttt{numpy}/\texttt{scipy}/\texttt{matplotlib}) used to produce every figure in this paper, including the Hayward comparison of Sec.~\ref{sec:universality}, the fit-window/grid systematic-uncertainty scan of Fig.~\ref{fig:horizons}(b), and the adaptive-bracket calculation of Sec.~\ref{sec:vaidya}, is provided as supplementary material alongside the manuscript; the adaptive bracket is used by default for all near-extremal ($x\lesssim10^{-3}$) quantities reported in Secs.~\ref{sec:universality} and \ref{sec:vaidya}, with the fixed grid retained only for the coarser cascade iteration of Sec.~\ref{sec:cascade} where $x$ does not approach this regime as closely.

\begin{figure*}[t]
\centering
\includegraphics[width=0.98\textwidth]{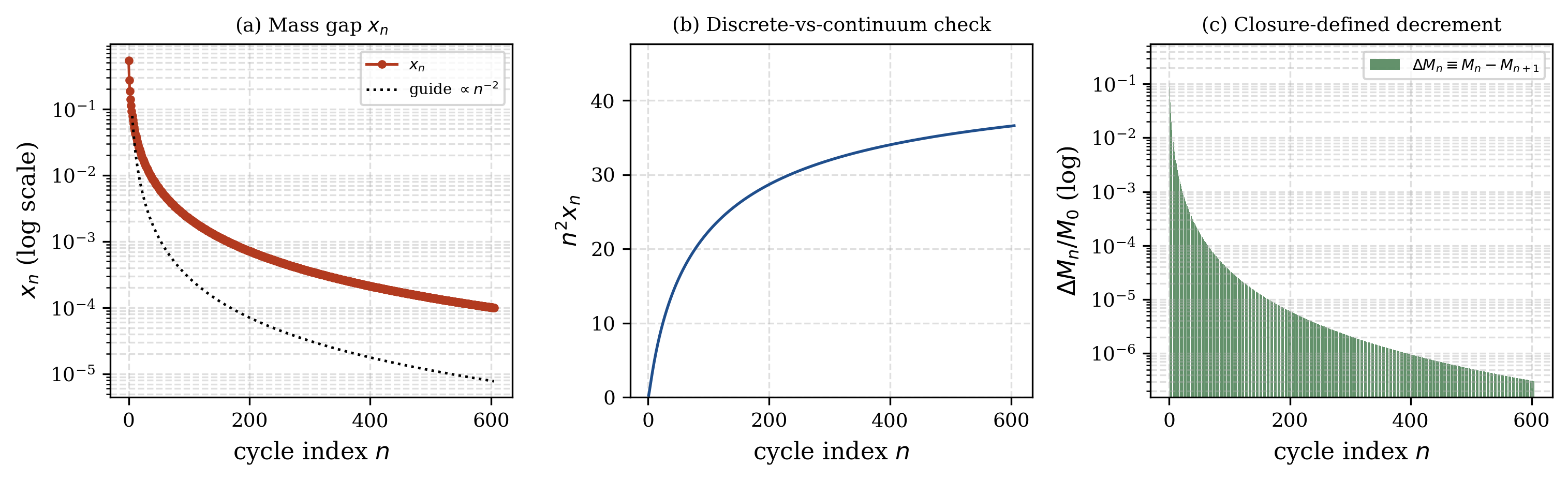}
\caption{(a) Mass gap $x_n=(M_n-M_{\rm crit})/M_0$ versus cycle index $n$ on a logarithmic scale, for $\eta_0=0.5$, $q=1$, $\varepsilon=10^{-4}$ ($N=605$ cycles); the dotted guide $\propto n^{-2}$ illustrates the asymptotic power law of Eq.~\eqref{eq:powerlaw}. (b) Discrete-vs-continuum check: $n^2x_n$ computed directly from the discrete recursion (no fit), which should approach a constant if Eq.~\eqref{eq:powerlaw} correctly captures the leading late-cycle behavior. The quantity is still rising slowly at $n=605$ ($n^2x_n\to$ a few tens over $n\sim10^2$--$10^3$; a longer supplementary run at looser tolerance shows it continues to rise slowly out to $n\sim6000$ before flattening near $\sim45$), consistent with a leading $n^{-2}$ term plus slowly-decaying subleading corrections; we have not fitted or derived the precise functional form of this subleading correction. (c) Closure-defined mass decrement per cycle, $\Delta M_n=M_n-M_{n+1}$ [Eq.~\eqref{eq:DeltaMdef}], on a logarithmic axis, with no residual or terminal energy added.}
\label{fig:mass_decay}
\end{figure*}

Figure~\ref{fig:mass_decay}(a) shows $x_n$ for $\eta_0=0.5$, $\varepsilon=10^{-4}$: reaching this tolerance requires $N=605$ cycles. Panel (b) provides a direct, fit-free check of the asymptotic scaling: rather than only overlaying a guide line, we plot $n^2x_n$ computed from the actual discrete sequence; it approaches a constant only slowly (still rising at $n=605$, and, as we checked in a longer supplementary run, continuing to rise slowly out to $n\sim6000$ before flattening near $\sim45$), which is what we expect from a leading $n^{-2}$ tail modulated by slowly-decaying subleading corrections intrinsic to the discrete-to-continuum map, not evidence against Eq.~\eqref{eq:powerlaw}. Panel (c) shows the closure-defined decrement $\Delta M_n$: the first cycle removes $\Delta M_0\simeq0.27\,M_0$, and subsequent cycles remove a rapidly decreasing amount, spanning more than five orders of magnitude by $n\sim600$.

\begin{figure}[t]
\centering
\includegraphics[width=0.49\textwidth]{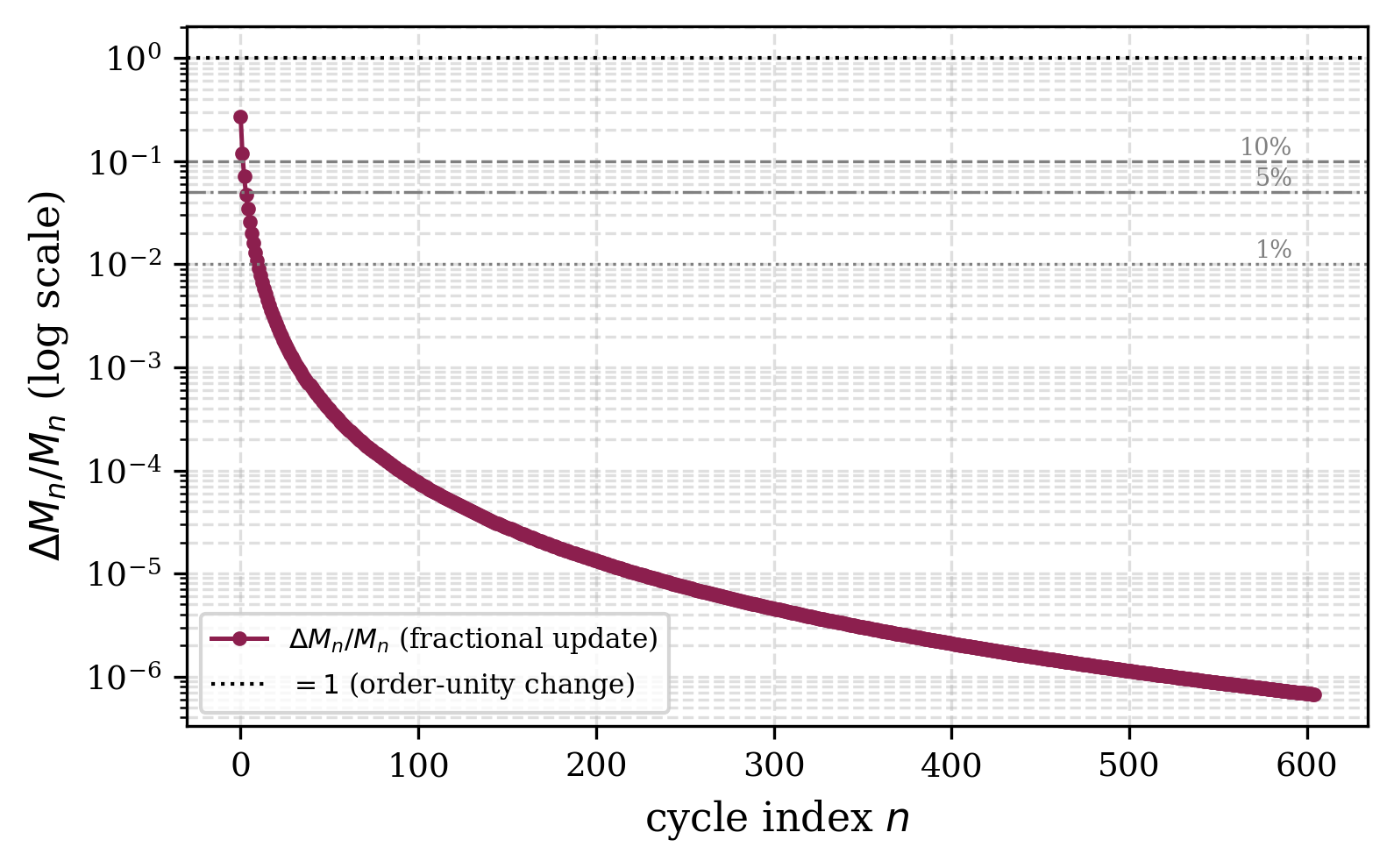}
\caption{Fractional per-cycle mass update, $\Delta M_n/M_n$, on a logarithmic scale, for the same run as Fig.~\ref{fig:mass_decay}, with several reference thresholds shown. This is a step-size diagnostic for the discrete recursion, not a physical adiabaticity test: no cycle in Eq.~\eqref{eq:recurrence} carries an assigned physical duration $\Delta v_n$ (it is absorbed into $\eta_0$), so a rate-based criterion cannot be formed from the cycle index alone. Of 605 cycles, 2 exceed 10\%, 3 exceed 5\%, and 10 exceed 1\% (the first cycle alone updates $M$ by 27\%); beyond $n\sim10$, $\Delta M_n/M_n$ falls several orders of magnitude below unity.}
\label{fig:adiabaticity}
\end{figure}

Figure~\ref{fig:adiabaticity} makes the step-size behavior of the closure quantitative: only a handful of the 605 cycles update $M$ by more than a few percent, with the first cycle alone responsible for a 27\% change -- an order-unity update by any threshold. We do not interpret this as a physical adiabaticity test (see above), but simply as showing that the recursion's own step size shrinks quickly. In this limited step-size sense, the closure's early cycles are its least smoothly-varying ones; the classification result of Eq.~\eqref{eq:powerlaw_general} describes the tail where the steps are already small, while the largest single contributions to the total mass loss come from the earliest, largest-step cycles.

\begin{figure}[t]
\centering
\includegraphics[width=0.8\textwidth]{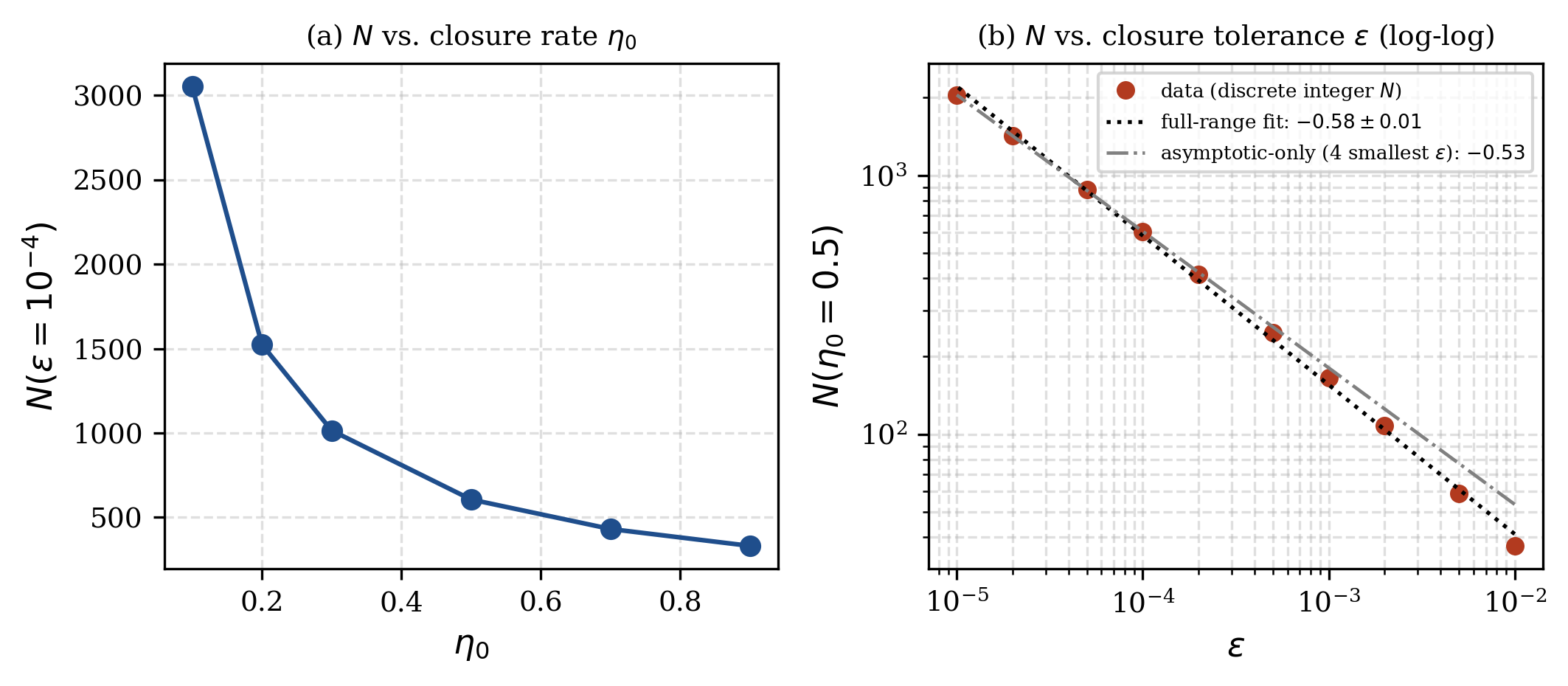}
\caption{Sensitivity of the cycle count $N$ to the model's free choices, over $\eta_0\in\{0.1,\dots,0.9\}$ (6 samples) and $\varepsilon\in\{10^{-2},\dots,10^{-5}\}$ (10 log-spaced samples), at $q=1$. (a) $N$ at fixed $\varepsilon=10^{-4}$ vs.\ $\eta_0$: an order-of-magnitude variation. (b) $N$ at fixed $\eta_0=0.5$ vs.\ $\varepsilon$ (integer-valued data, log-log). An unweighted full-range fit gives slope $-0.58\pm0.01$; restricting to the four smallest (most asymptotic) $\varepsilon$ gives $-0.53$, markedly closer to the analytic prediction $-1/2$ of Eq.~\eqref{eq:powerlaw}. We read the full-range/asymptotic-only difference as a real pre-asymptotic correction, not fit noise, and avoid claiming the fit ``confirms'' $-1/2$ outright.}
\label{fig:sensitivity}
\end{figure}

Figure~\ref{fig:sensitivity} quantifies the closure's non-universality directly: at fixed $\varepsilon$, $N$ varies by close to an order of magnitude as $\eta_0$ ranges over the sampled values; at fixed $\eta_0$, the asymptotic-only fit ($-0.53$) sits markedly closer to $-1/2$ than the full-range fit ($-0.58\pm0.01$), exactly the pre-asymptotic-correction pattern anticipated from the slow convergence seen directly in Fig.~\ref{fig:mass_decay}(b). We regard this scan, together with the discrete-vs-continuum check, as the honest numerical content of the model: internally consistent, but not indicative of a universal cycle count.

\section{An Effective Adiabatic Model for Outer-Horizon Mass Loss}
\label{sec:vaidya}

The closure of Sec.~\ref{sec:cascade} postulates both the functional form of the mass loss per cycle and its two free parameters $(\eta_0,q)$. Here we construct, for the narrower question of how the \emph{outer} horizon mass evolves within a single trapped phase, an effective adiabatic model tied to the cited ALS flux rather than to a postulated efficiency. We state the status of this model precisely, since a naive area-factor argument alone is not, by itself, sufficient to justify it.

\paragraph{Status of the model.} Matching a two-dimensional Polyakov stress tensor to a four-dimensional source in a time-dependent metric is not simply a matter of undoing the $s$-wave reduction; it requires choosing a specific metric ansatz for the time dependence, computing its exact Einstein tensor, and specifying which part of that tensor is attributed to the additional semiclassical flux as opposed to whatever classical matter already supports the static geometry at each instant. Below we make this explicit for the simplest available ansatz -- the core scale $g$ held fixed while $M\to M(v)$, i.e.\ a generalized Vaidya-Bardeen metric -- and we compute its Einstein tensor exactly rather than assuming the vacuum Schwarzschild-Vaidya relation. The result is an effective adiabatic prescription, not a first-principles non-adiabatic solution of the semiclassical Einstein equations; slowly evaporating black holes are known to be well approximated by quasi-static sequences of static metrics quite generally~\cite{DahalSimovic2023}, which is the sense in which we use it here.

\paragraph{Exact Einstein tensor of the generalized Vaidya-Bardeen ansatz.} Consider $ds^2=-f(r,v)dv^2+2dv\,dr+r^2d\Omega^2$ with
\begin{equation}
\label{eq:vaidya_bardeen}
f(r,v) = 1 - \frac{2M(v)r^2}{(r^2+g^2)^{3/2}},
\end{equation}
$g$ fixed and $M(v)$ an arbitrary function of advanced time. A direct symbolic computation of the Einstein tensor (Christoffel symbols, Riemann and Ricci tensors, and $G_{vv}=R_{vv}-\tfrac12g_{vv}R$, all evaluated in closed form) gives, collecting terms in $\dot M\equiv dM/dv$,
\begin{equation}
\label{eq:Gvv_exact}
G_{vv} = \frac{2r}{(r^2+g^2)^{3/2}}\,\dot M \;+\; G_{vv}^{(0)}(r,M),
\end{equation}
where $G_{vv}^{(0)}$ is the $\dot M$-independent remainder, a function of $r$ and the instantaneous $M$ only. At $g=0$ this reduces to the standard Schwarzschild-Vaidya result $G_{vv}=2\dot M/r^2$; at $g\neq0$ the coefficient of $\dot M$ differs from $2/r^2$ by the factor
\begin{equation}
\label{eq:Zfactor}
Z(r) \equiv \frac{(r^2+g^2)^{3/2}}{r^3}, \qquad \frac{2r}{(r^2+g^2)^{3/2}} = \frac{2}{r^2}\,\frac{1}{Z(r)},
\end{equation}
which is an $O(1)$ correction wherever $g/r$ is not small: at $r=r_+(M_0)$ for our fiducial parameters, $Z\simeq1.05$, rising to $Z(r_0)=(3/2)^{3/2}\simeq1.84$ at the fold. We identify $G_{vv}^{(0)}$ with the Einstein tensor already sourced, at fixed $M$, by the classical nonlinear-electrodynamics stress tensor of the static Bardeen solution~\cite{Ayon-BeatoGarcia2000}: it is present whether or not $M$ varies, and requires no additional (quantum) flux to support. What requires an additional source is specifically the $\dot M$-proportional piece, which we match to the semiclassical flux via $\big[2r/(r^2+g^2)^{3/2}\big]\dot M = 8\pi T_{vv}^{\rm (4D),\,RSET}=(2/r^2)\langle T_{vv}\rangle^{\rm (2D)}$, using the ALS reduction convention for the last equality. Solving for $\dot M$,
\begin{equation}
\label{eq:Tvv_ALS}
\frac{dM}{dv} = \langle T_{vv}\rangle_{\rm phys}\big|_{r_+(M)} \, Z\big(r_+(M)\big) = -\frac{\kappa_+(M)^2}{48\pi}\,Z\big(r_+(M)\big),
\end{equation}
using $\langle T_{vv}\rangle_{\rm phys}|_{r_h}=-\kappa_h^2/(48\pi)$ from ALS. Equation~\eqref{eq:Tvv_ALS} is the corrected effective mass-loss law; it reduces to the naive (uncorrected) relation only in the formal limit $g\to0$. We stress that isolating the $\dot M$-linear piece of $G_{vv}$ as ``the semiclassical flux'' is itself part of the adiabatic ansatz -- a genuinely non-adiabatic evolution could redistribute stress-energy between the core and the additional flux in ways this decomposition does not capture -- and that Eq.~\eqref{eq:Tvv_ALS} says nothing about whether the conserved combination $\nabla_aT^{ab}=0$ for the total effective source is separately satisfied by the core and additional pieces individually; we have not verified this beyond the level of the $G_{vv}$ component used here.

\begin{figure}[t]
\centering
\includegraphics[width=0.49\textwidth]{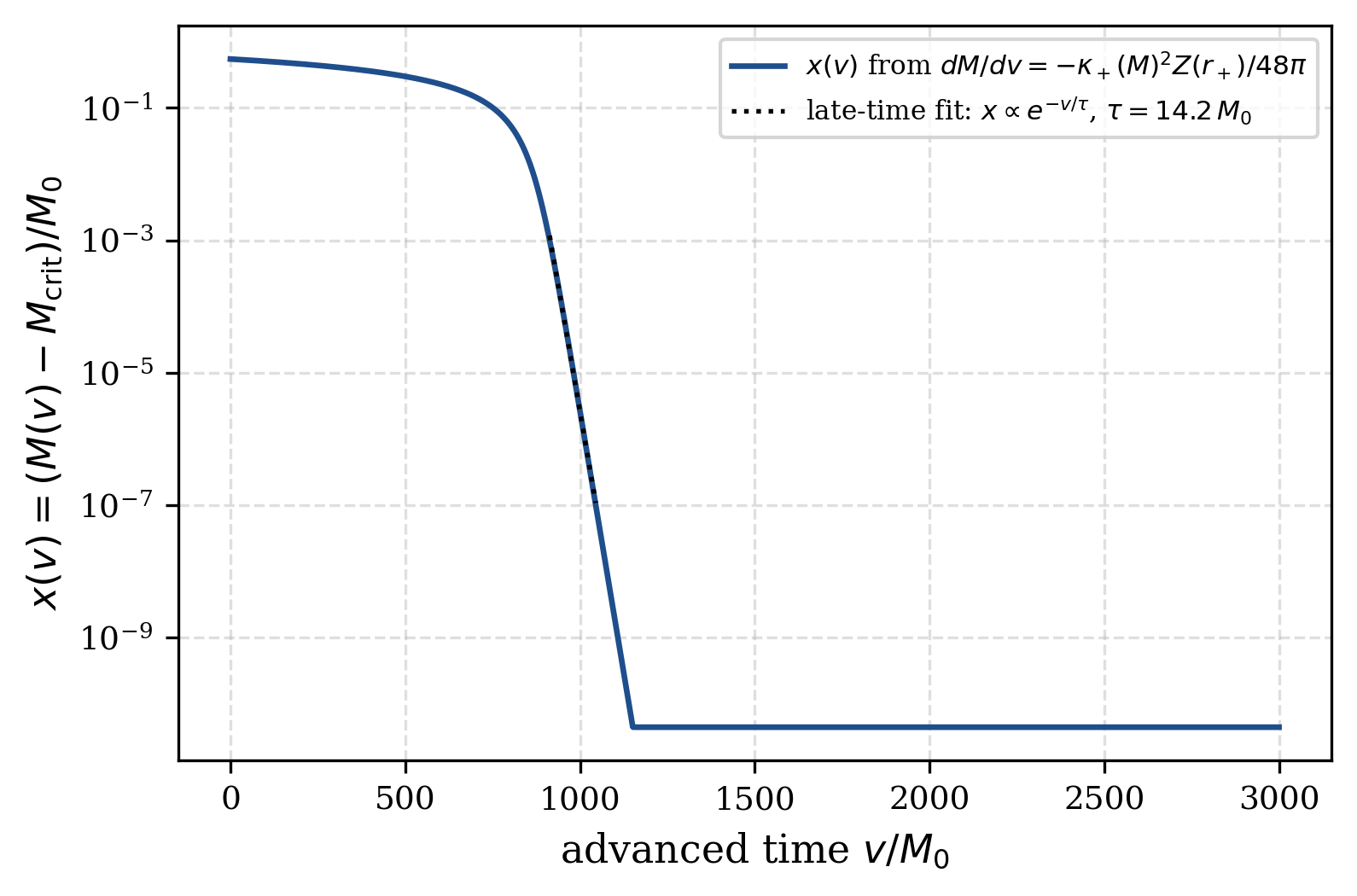}
\caption{Numerical solution of the effective adiabatic law~\eqref{eq:Tvv_ALS}, $x(v)=(M(v)-M_{\rm crit})/M_0$ versus advanced time $v/M_0$, for $g=0.35\,M_0$, $M(0)=M_0$. The approach to $M_{\rm crit}$ is slow at first ($\kappa_+$ is small for $M$ near $M_0$; cf.\ Fig.~\ref{fig:horizons}b), accelerates once $\kappa_+(M)$ grows, and becomes a clean exponential relaxation, $x\propto e^{-v/\tau}$ with $\tau=14.2\,M_0$, once $\kappa_+^2\propto x$ near the fold (Sec.~\ref{sec:universality}); $Z(r_+)$ approaches the finite constant $(3/2)^{3/2}$ there and does not change this qualitative behavior. The curve reaches $x<10^{-4}$ by $v\simeq949\,M_0$ and falls below our numerical floor of $x\sim2\times10^{-10}$ (set by the terminal condition in the integrator, not a physical endpoint) by $v\simeq1300\,M_0$; Eq.~\eqref{eq:Tvv_ALS} itself only guarantees an asymptotic approach, never an exact finite-$v$ endpoint.}
\label{fig:vaidya}
\end{figure}

Figure~\ref{fig:vaidya} shows the numerical solution of Eq.~\eqref{eq:Tvv_ALS}. As in the $g\to0$ (Schwarzschild) limit of this calculation, the late-time behavior is a clean exponential relaxation rather than a power law: since $\kappa_+(M)\propto\sqrt{M-M_{\rm crit}}$ near threshold (Sec.~\ref{sec:universality}) while $Z(r_+)\to(3/2)^{3/2}$ stays finite and nonzero, Eq.~\eqref{eq:Tvv_ALS} still linearizes to $dx/dv\simeq-{\rm const}\times x$ near $M_{\rm crit}$, giving $x(v)\sim e^{-v/\tau}$ with $\tau=14.2\,M_0$ fitted directly from the $Z$-corrected solved trajectory -- about a factor of $1.8$ shorter than the value obtained from the uncorrected, $g\to0$ relation, illustrating that the core-curvature correction is quantitatively, though not qualitatively, significant. This is different from the power-law-in-cycle-number tail $x_n\sim n^{-2}$ of the discrete closure in Sec.~\ref{sec:cascade}, since $v$ and $n$ are different variables governed by different (and here, deliberately not reconciled) dynamics.

More importantly: this effective adiabatic model does not reproduce the finite-advanced-time disappearance of the trapped region reported in fully non-adiabatic numerical solutions~\cite{BarenboimFrolovKunstatter2024,BoyanovHilditchSemiao2026}. Equation~\eqref{eq:Tvv_ALS} only ever approaches $M_{\rm crit}$ asymptotically, even though it does so rapidly in practice ($v\lesssim1200\,M_0$ to reach $x<10^{-5}$). We do \emph{not} read this discrepancy as establishing that the finite-time closure found in genuine PDE solutions is specifically a non-adiabatic effect: the mismatch could equally reflect the choice of quantum state, the dimensional-reduction convention, the specific Vaidya-Bardeen ansatz, the core matter's response to time-dependence, boundary conditions, or details of the cited numerical setups, none of which this adiabatic model controls for. What the comparison does show, robustly, is that this particular effective model -- built from the cited fixed-background flux under the stated adiabatic ansatz -- is insufficient by itself to reproduce the finite-time result, which is consistent with (but does not prove) the point already made in Sec.~\ref{sec:cascade} that adiabatic reasoning about the earliest, largest-change part of this problem is not obviously self-consistent (Fig.~\ref{fig:adiabaticity}).

We emphasize what this section does and does not accomplish. It replaces the postulated $(\eta_0,q)$ closure with an explicit, effective mass-update law for one specific quantity (the outer-horizon mass, under a stated adiabatic ansatz and a specific generalized-Vaidya ansatz for the time dependence), derived from the exact Einstein tensor of that ansatz matched to the cited RSET flux. It does not solve the semiclassical Einstein equations for the full trapped-to-anti-trapped-to-trapped cascade, does not verify the conservation of the total effective source beyond the $G_{vv}$ component used, does not determine $\Delta v_n$ for the anti-trapped phase, and does not track the outgoing-flux-driven formation of the anti-trapped region itself. A full multi-cycle self-consistent calculation, along the lines of the single-transition results of Ref.~\cite{BoyanovHilditchSemiao2026} extended through a second transition, remains the open problem identified throughout this paper.

\paragraph{A timescale comparison.} With this caveat in mind, we can still ask, using only the quantities defined above, which of two mechanisms -- outer-horizon evaporation under the adiabatic model, or inner-horizon amplification under the ALS fixed-background diagnostic -- operates faster at a given mass, sharpening the ``faster than Hawking'' statements made qualitatively in the literature~\cite{BarenboimFrolovKunstatter2024,BoyanovHilditchSemiao2026,BarceloCarballoRubio2016}. Define $\tau_-(M)\equiv1/(2|\kappa_-(M)|)$, the $e$-folding time of the ALS growth factor in Eq.~\eqref{eq:flux_growth}, and $\tau_{\rm evap}(M)\equiv(M-M_{\rm crit})/|dM/dv|$ from Eq.~\eqref{eq:Tvv_ALS}. Both are fully determined by the static Bardeen formulas in hand, with no additional free parameter beyond the effective-model status of $\tau_{\rm evap}$ already flagged.

\begin{figure}[t]
\centering
\includegraphics[width=0.8\textwidth]{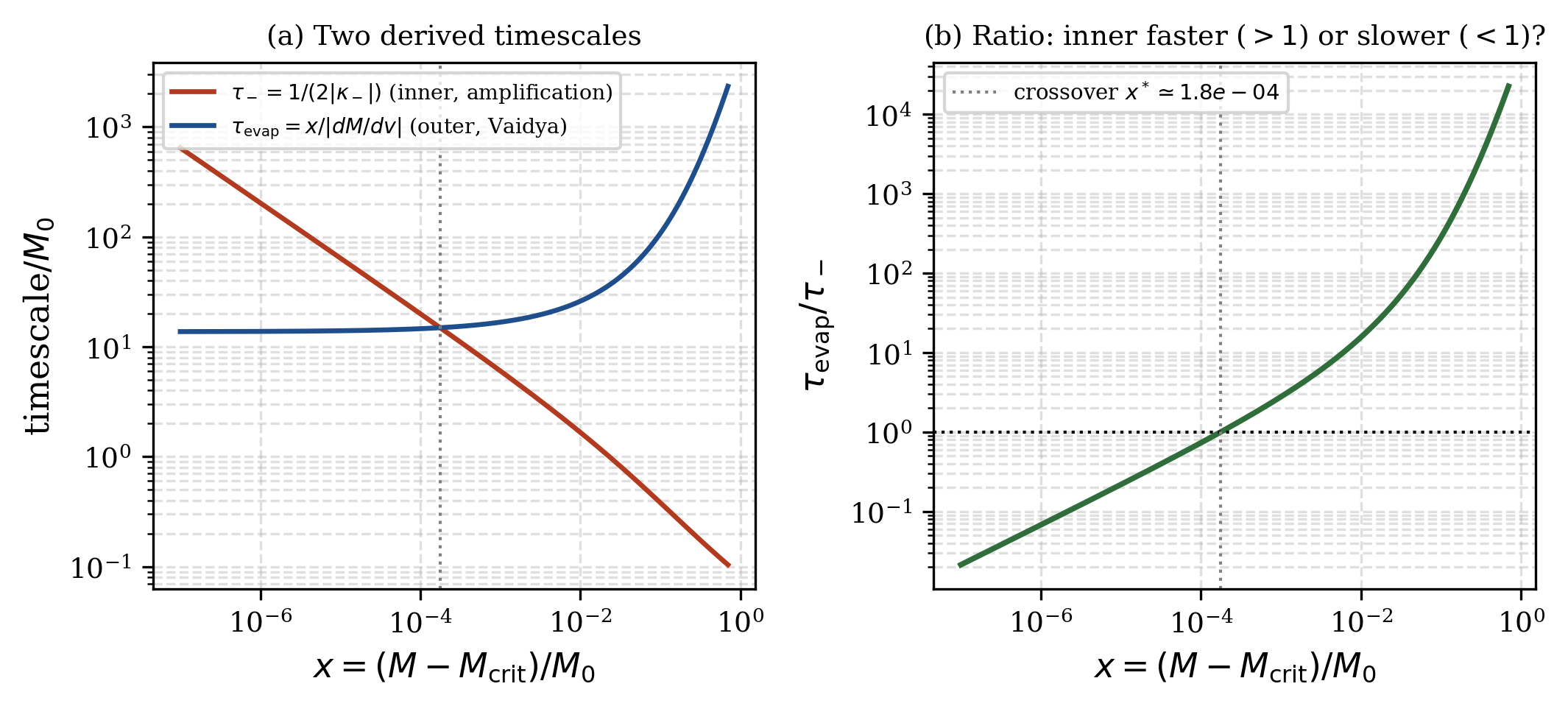}
\caption{(a) The two timescales $\tau_-(M)$ and $\tau_{\rm evap}(M)$ versus $x=(M-M_{\rm crit})/M_0$, log-log; $\tau_{\rm evap}$ uses the $Z$-corrected law, Eq.~\eqref{eq:Tvv_ALS}. (b) Their ratio. Away from extremality, $\tau_-\ll\tau_{\rm evap}$ by up to four orders of magnitude at $x=x_0\simeq0.545$ (ratio $\simeq1.1\times10^4$): under the adiabatic model, the inner-horizon amplification mechanism is parametrically faster than outer-horizon evaporation, consistent with the numerically reported ``inside-out'' evaporation faster than $O(M^3)$. Extremely close to extremality, $x\lesssim x^*\simeq1.8\times10^{-4}$ (for $g=0.35\,M_0$), the ordering reverses: $\kappa_-\to0$ quenches the amplification mechanism itself faster (in $x$) than the adiabatic model's evaporation rate is quenched, so $\tau_-\gg\tau_{\rm evap}$ there. Resolving this crossover numerically requires bracketing $r_\pm$ with a window adapted to the expected $\sqrt{x}$ horizon splitting (Sec.~\ref{sec:numerics}).}
\label{fig:timescales}
\end{figure}

Figure~\ref{fig:timescales} shows both timescales and their ratio. For most of the range explored (order-unity $x$ down to $x\sim10^{-4}$), $\tau_-\ll\tau_{\rm evap}$: the inner-horizon mechanism operates faster than the outer-horizon one under this model, by up to four orders of magnitude. We read this as an illustration, internal to the adiabatic model, of why the trapped region could plausibly disappear ``inside-out'' well before an $O(M^3)$ timescale, consistent with the numerical results of Refs.~\cite{BarenboimFrolovKunstatter2024,BoyanovHilditchSemiao2026}; it is not a dynamical proof that the anti-trapped region forms, which remains tied to the outgoing-flux diagnostic and the caveats above. The ordering reverses very close to extremality: because $\tau_{\rm evap}$ approaches the finite constant $\tau=14.2\,M_0$ near threshold while $\tau_-\propto x^{-1/2}$ continues to grow, there is a crossover at $x^*\simeq1.8\times10^{-4}$ below which $\tau_-\gg\tau_{\rm evap}$. This crossover is a feature of the specific adiabatic model and metric family used here, not an independently verified physical prediction; we flag it as a quantitative feature that a genuine non-adiabatic treatment of the last stage of the cascade should address, confirm, or revise.

\section{An Illustrative Radiation Profile}
\label{sec:phenomenology}

\begin{figure}[t]
\centering
\includegraphics[width=0.6\textwidth]{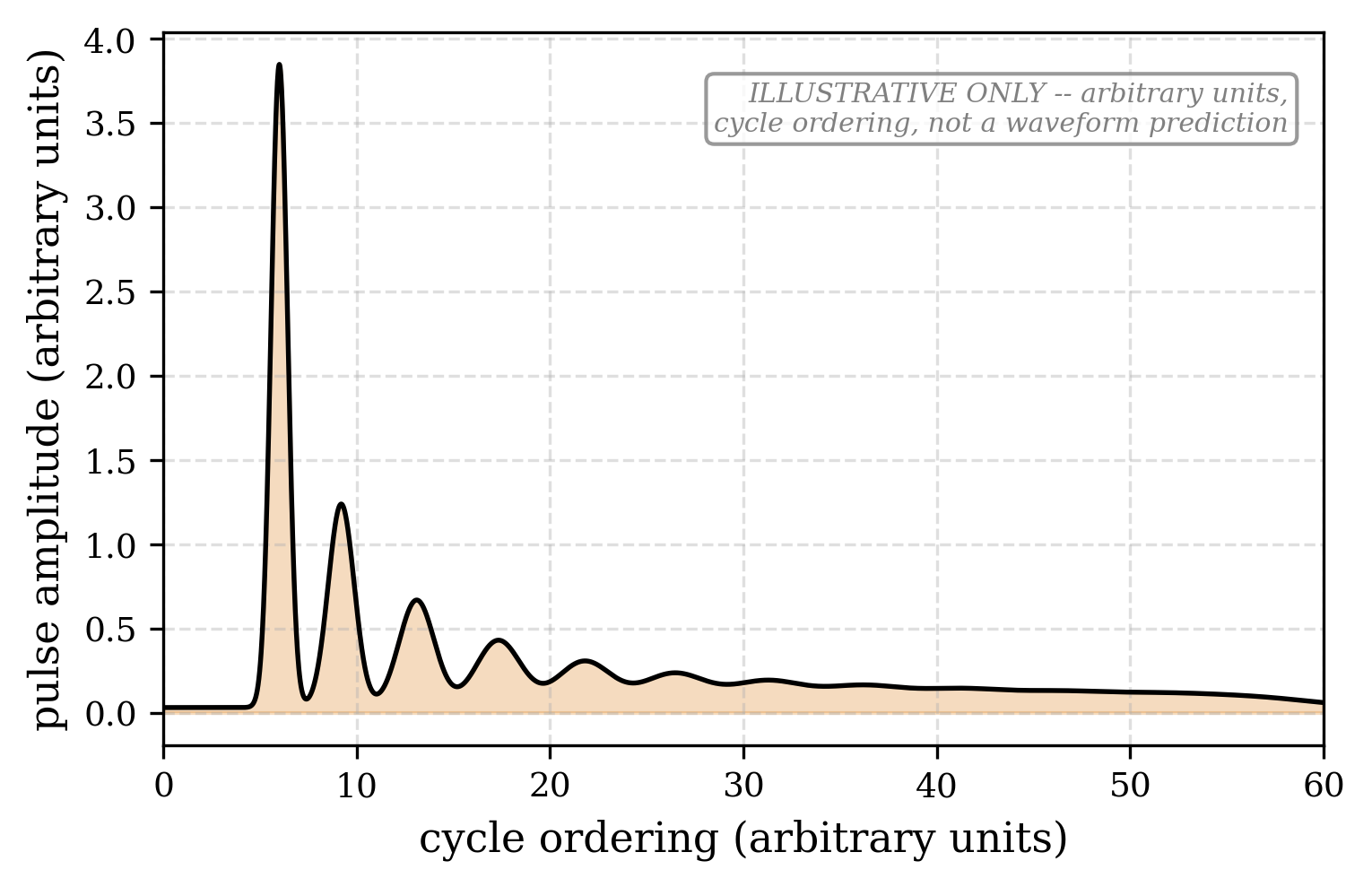}
\caption{Purely schematic representation of the closure-defined decrements $\Delta M_n$ [Eq.~\eqref{eq:DeltaMdef}] as a sequence of pulses, with amplitude $\propto\Delta M_n$ and arbitrarily increasing width, for the first 12 (best-resolved) cycles. Axes are deliberately labeled in arbitrary units and cycle ordering, \emph{not} physical retarded time, since neither the pulse shape, centers, nor widths are derived from a transport equation; this panel carries no quantitative predictive content.}
\label{fig:bursts}
\end{figure}

If each cycle's closure-defined decrement $\Delta M_n$ is eventually released as a pulse toward $\mathscr{I}^+$, one might picture a train of pulses of decreasing amplitude. Figure~\ref{fig:bursts} sketches this using $\Delta M_n$ from Fig.~\ref{fig:mass_decay}(c), for the first twelve cycles. We emphasize once more that this is a cartoon, not a propagated solution: an actual burst waveform would require solving the transport of Eq.~\eqref{eq:Tuu} to $\mathscr{I}^+$ through the intervening (backreacted) geometry, following the methods of Ref.~\cite{Arrechea2026}, which we do not attempt here; we have therefore relabeled the axes to avoid any impression of a physical waveform prediction.

\section{Conclusions}
\label{sec:conclusions}

We have analyzed a minimal, explicitly phenomenological closure for a hypothetical cascade of semiclassical black-hole--white-hole transitions, built on the single-transition RSET diagnostic of Arrechea, Liberati and Spadafora~\cite{Arrechea2026}. Two results are, we believe, robust within the static diagnostic used here, independent of whether the closure of Sec.~\ref{sec:cascade} describes real backreaction: (i) the ALS diagnostic quantity, Eq.~\eqref{eq:flux_growth}, does \emph{not} by itself imply any damping near extremality at fixed cycle duration, since $f^{(3)}(r_0)\neq0$ in both metrics checked, Eq.~\eqref{eq:f3r0}; and (ii) the inner-horizon surface gravity vanishes as a square root of the mass gap, $\kappa_-\propto(M-M_{\rm crit})^{1/2}$, which Sec.~\ref{sec:universality} shows follows from explicit nondegeneracy hypotheses on any fold-type horizon merger (Proposition~\ref{prop:fold}), not from any Bardeen-specific detail, and which we confirm in two independent metrics with a quantified fit-window/grid systematic envelope in addition to the regression error. Neither statement has been shown to survive a self-consistent, dynamically backreacted evolution; both are established here only for static backgrounds under the proposition's stated hypotheses.

Building one linear ($q=1$) closure on the verified $p=1/2$ scaling gives a cascade that approaches $M_{\rm crit}$ only as a power law, $x_n\sim n^{-2}$, which we have generalized to the classification $x_n\sim n^{-1/(pq)}$, $N(\varepsilon)\sim\varepsilon^{-pq}$, for arbitrary fold and closure exponents [Eq.~\eqref{eq:powerlaw_general}]; the closure exponent $q$ and efficiency $\eta_0$ are postulated, not calibrated, so this classification describes the properties of a chosen recursion class, not a physical prediction. We verified the $q=1$ special case against the raw discrete recursion (not only a fitted guide) and tracked a fractional per-cycle update diagnostic, which is not itself a physical adiabaticity test (Figs.~\ref{fig:mass_decay}--\ref{fig:adiabaticity}). As a separate exercise, Sec.~\ref{sec:vaidya} constructed an effective adiabatic model for the outer-horizon mass from the exact Einstein tensor of a generalized Vaidya-Bardeen ansatz matched to the cited RSET flux -- a model, not a first-principles derivation, since it rests on a specific ansatz for the time dependence and an assumption about which part of the source is ``additional'' -- finding an exponential-in-$v$ relaxation to $M_{\rm crit}$ that does not by itself reproduce the finite-time transition seen in non-adiabatic simulations. Comparing the resulting timescale with the inner-horizon amplification rate set by $\kappa_-$, the latter is faster by up to four orders of magnitude away from extremality, broadly consistent with the qualitative ``inside-out, faster than Hawking'' picture in the literature, but slower below a computed crossover very close to extremality; this crossover is a feature of the specific model and metric used, not an independently verified physical prediction. We regard this as the honest content of the paper: a self-consistency and classification study of one closure class, sharpened by a proposition proved under explicit hypotheses and by an effective-model dynamical exercise, not a demonstration that semiclassical black holes generically cascade into a horizon-free remnant.

We are explicit about what would be needed to upgrade this from a diagnostic to a prediction, roughly in order of scientific payoff: (a) a self-consistent, multi-cycle numerical solution of the semiclassical Einstein equations -- extending single-transition results such as Ref.~\cite{BoyanovHilditchSemiao2026} through more than one transition, and, per Sec.~\ref{sec:vaidya}, necessarily non-adiabatic near each transition -- from which $\Delta v_n$ and the 2D-to-4D mass-loss map would be outputs rather than inputs; (b) calibration of $(\eta_0,q)$ against such a solution, with error bars and a stated regime of validity, rather than the postulated values used here; (c) extending the universality argument of Sec.~\ref{sec:universality} to a broader class of regular and charged metrics, to establish (or falsify) true metric-independence beyond the two examples checked; and (d) a controlled physical flux model -- fixing the quantum state, the matching to $\mathscr{I}^+$, and the leading greybody corrections -- short of a full astrophysical prediction. Any such calculation would also need to confront the competing classical instabilities of both inner horizons~\cite{PoissonIsrael1990,CarballoRubio2024} and white holes~\cite{Eardley1974,WaldRamaswamy1980}, neglected here, and might usefully be compared with other recently proposed non-singular radiating scenarios~\cite{DiFilippo2026} that reach horizon-free outcomes by different routes. We leave all of this for future work, and offer the present classification -- together with the derived-versus-postulated distinction sharpened in Sec.~\ref{sec:vaidya} -- as a concrete, falsifiable target for it: any self-consistent multi-cycle calculation should reduce, in the appropriate quasi-static and fixed-closure limit, to some point in the $(p,q)$ family worked out here, and should explain, non-adiabatically, exactly the gap that Sec.~\ref{sec:vaidya} leaves open.

\end{document}